\newcommand{\remove}[1]{}
\newcommand{\ac}{\mathrm{ac}}
\newcommand{\cent}{\mathrm{cr}}
\newcommand{\pr}{\mathrm{pr}}
\newcommand{\lacr}{^{la}\mathrm{ac}}
\newcommand{\lapr}{^{la}\mathrm{pr}}
\begin{document}
%
% paper title
% can use linebreaks \\ within to get better formatting as desired
\title{Limited Attention and Centrality in Social Networks}
\author{\IEEEauthorblockN{Kristina Lerman$^1$, Prachi Jain$^2$, Rumi Ghosh$^3$, Jeon-Hyung Kang$^1$ and Ponnurangam Kumaraguru$^2$}
\IEEEauthorblockA{1. USC Information Sciences Institute, Marina del Rey, CA 90292 USA\\
2. Indraprastha Institute of Information Technology Delhi, India\\
3. HP Labs, Palo Alto, CA, USA }
}

\remove{
% author names and affiliations
% use a multiple column layout for up to three different
% affiliations
\author{\IEEEauthorblockN{Kristina Lerman \\ and Jeon-hyung Kang}
\IEEEauthorblockA{USC Information Sciences Institute\\
Marina del Rey, CA 90292\\
Email: \{lerman@isi.edu,jeonh@usc.edu\}}
\and
\IEEEauthorblockN{Prachi Jain \\ and Ponnurangam Kumaraguru}
\IEEEauthorblockA{Indraprastha Institute of Information Technology\\}
%New Delhi, India\\
%Email: \{prachi1107,pk\}@iiitd.ac.in}
\and
\IEEEauthorblockN{Rumi Ghosh}
\IEEEauthorblockA{HP Labs\\}
%Palo Alto, CA \\
%Email:rumi.ghosh@hp.com }
}
}
% conference papers do not typically use \thanks and this command
% is locked out in conference mode. If really needed, such as for
% the acknowledgment of grants, issue a \IEEEoverridecommandlockouts
% after \documentclass

% make the title area
\maketitle

\begin{abstract} %\small\baselineskip=9pt
How does one find important or influential people in an online social network? Researchers have proposed a variety of centrality measures to identify individuals that are, for example, often visited by a random walk, infected in an epidemic, or receive many messages from friends. Recent research suggests that a social media users' capacity to respond to an incoming message is constrained by their finite attention, which they divide over all incoming information, i.e., information sent by users they follow. We propose a new measure of centrality --- limited-attention version of Bonacich's Alpha-centrality --- that models the effect of limited attention on epidemic diffusion. The new measure describes a process in which nodes broadcast messages to their out-neighbors, but the neighbors' ability to receive the message depends on the number of in-neighbors they have.
We evaluate the proposed measure on  real-world online social networks and show that it can better reproduce an empirical influence ranking of users than other popular centrality measures.
%We give a scalable approximate algorithm for computing the proposed measure in massive graphs. Finally,
% that takes into account the limited attention, broadcast-based interactions in social media best predicts influential users in social media.
\end{abstract}

\section{Introduction}

% What is the problem being solved?

% What are the state of the art approaches? What are their problems?

% What is our solution/contribution?

An individual's position within a social network is thought to confer advantages, allowing him to exploit the structure of social ties to accumulate power, prestige or
influence~\cite{Bavelas,Granovetter73,Freeman79,Krackhardt93,Burt95,Burt04}.
Many measures of centrality were proposed to capture the  importance of the position in a network. Some of these, like degree and betweenness centrality~\cite{Freeman79}, measure an individual's ability to control the flow of information in the network. Other measures give higher centrality to those positions that are themselves connected to central positions~\cite{Katz53,Bonacich87,PageRank,Bonacich01}.
The growing popularity of online social media has sparked new interest in centrality. Researchers have proposed using centrality to identify influential social media users~\cite{Cha10icwsm,Bakshy11wsdm} whose endorsement can, for example, maximize the reach of a ``viral'' marketing campaign~\cite{Kempe03}, or conversely, who can most quickly stop a malicious rumor from spreading.

Most of the existing centrality measures examine link structure of the network to identify key nodes within it. Take, for example, the Web, which is represented as a directed graph of hyperlinked Web pages. An important page within this graph is one that is visited often by Web surfers. This observation forms the basis of Google's original Web page ranking algorithm PageRank~\cite{PageRank}. By modeling Web surfing as a random walk, PageRank assigns a centrality score to each page based on its value in the equilibrium distribution of the random walk. However,  a central individual in a social network through which disease is spreading is one who infects, either directly or indirectly, most others. Unlike Web surfing, the spread of a virus is modeled as an epidemic process. Thus, PageRank, which is intimately connected with random walks, will not identify key individuals in a social network. Instead, a measure such as the Katz score~\cite{Katz53} or Bonacich's Alpha centrality~\cite{Bonacich87}, which gives the equilibrium distribution of an epidemic process on a network~\cite{Ghosh12nonconservative}, is more appropriate.

Now consider information spreading through an online social network, for instance, by users sending messages or product recommendations to their friends. While information spread in networks is often modeled as an epidemic process (e.g., \cite{Gruhl04,Leskovec07}), recent research suggests that psychological and cognitive factors are important in determining whether a person will \emph{see} and \emph{act} on friends' recommendations. Specifically, attention  was shown to be a critical aspect of online behavior~\cite{Goldhaber97,Wu07,Weng:2012dd,Hodas12socialcom}. Attention is the psychological mechanism that controls how we process incoming stimuli and decide what activities to engage in~\cite{Kahneman73,Rensink:1997vj}. Actions, such as reading a tweet, browsing a Web page, or responding to email, require mental effort, and since human brain's capacity for mental effort is limited, so is attention.  Moreover, online users must divide their attention over all incoming stimuli~\cite{Hodas12socialcom}. As a consequence, the more stimuli people have to process, the smaller the probability they will respond to any one stimulus. While attention need not be distributed uniformly over friends --- some friends may receive a greater share of a person's attention due to familiarity, trust, social closeness, or influence~\cite{Gilbert09,Huberman-attention} --- for simplicity, we assume that each friend receives the same fraction of a person's attention. We call this phenomenon \emph{limited attention} (\emph{la}).

Limited, divided attention changes the nature of interactions between nodes in a network and therefore, how central nodes are identified. Now a node's capacity to infect others depends not only on how many connections it has but also on who and how many others these nodes are connected to. In Section~\ref{sec:centrality}, we introduce a new centrality measure --- limited-attention Alpha-Centrality ($laAC$) --- that models attention-limited nature of social interactions and provide its mathematical definition. 
% We use terms limited-attention Bonacich centrality and limited-attention Alpha centrality interchangeably in the paper.
For completeness, we also introduce and define limited-attention PageRank ($laPR$), which models the effect of limited attention on a random walk process. In Section~\ref{sec:experiments}, we evaluate the proposed algorithms and centrality measures on real-world data, including follower graphs from social media sites Digg and Twitter. In the Appendix, we present fast approximate algorithms that allow us to calculate these measures even on large graphs and provide their performance guarantees.

\section{Dynamics, Attention and Centrality}
Centrality measures examine topology of a network to identify important or central nodes within it.  It has been recognized recently, however, that centrality is the product of a network's \emph{links} and the \emph{dynamical processes} taking place on it, which determine how ideas, pathogens, or influence flow along social links~\cite{Borgatti05,Lambiotte11,Ghosh11nonconservative,Ghosh12nonconservative}. Take, for example, one definition of centrality used by the popular PageRank algorithm~\cite{PageRank}: a network node is important if it is often visited by a random walk.
A random walk is a stochastic process that starts at some node, and at each time step transitions to a randomly selected neighbor of the current node. Variants of the random walk are used to model flows in physical systems, e.g., chemical and heat diffusion, and can be used to model social phenomena resulting from one-to-one interactions, such as Web surfing, money exchange and phone conversations.
%This is because while surfing the Web, the user must select one of the outgoing hyperlinks from the page she is currently browsing and navigate to it. Similarly, in phone-based interactions, an individual must select one friend and place a call to her.

\begin{figure}[htbp] %
   \centering
   \begin{tabular}{cc}
 %  1:1  & 1:$N$  \\
   \includegraphics[height=1in]{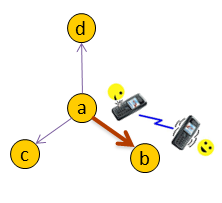} &
   \includegraphics[height=1in]{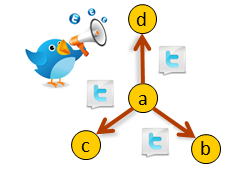}\\
   (a) & (b) \\
   \includegraphics[height=1in]{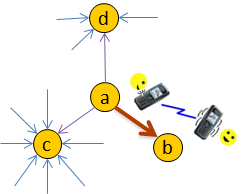} &
   \includegraphics[height=1in]{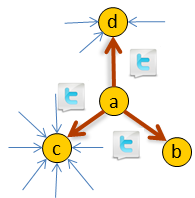}\\
   (c) & (d)
   \end{tabular}
   \caption{Different dynamical processes taking place on a network: (a) random walk, (b) epidemic spread, and their limited-attention variants: (c) limited-attention random walk and (d) limited-attention epidemic spread. In limited-attention process, a node's capacity to receive a message depends on its in-degree.}
   \label{fig:toy}
\vspace{-0.865em}
\end{figure}

In a social network, a message or a virus propagates by being broadcast by an infected individual to \emph{all} her (out-) neighbors. Such processes are modeled as an epidemic (or a contact) process. The difference between it and the random walk is illustrated in Figure~\ref{fig:toy}, which shows the neighborhood of node $a$. Directed edges in this network represent, for example, hyperlinks between Web pages, or who can call whom in a social network, or in the context of social media, they can also indicate that $b$, $c$ and $d$ follow $a$ and receive broadcasts from her. Figure~\ref{fig:toy}(a) illustrates a one-to-one interaction, e.g., phone call, while Fig.~\ref{fig:toy}(b) shows a one-to-many broadcast.

Until now, we have assumed that nodes have an unlimited capacity to receive incoming signals, whether Web surfers, phone calls, or messages from friends. This may not always be the case. Suppose a Web server can receive a limited number of connections, in extreme case only one. Then the probability that a Web surfer starting at $a$ will reach $b$ depends on whether the Web server in charge of  $b$ is able to receive an incoming  request. In a social network, cognitive and perceptual factors can limit a person's capacity to process incoming messages~\cite{Hodas12socialcom}. Such factors collectively figure into the phenomenon we refer to as \emph{limited attention}. This means that the probability a user will respond to a message from a friend decreases with the number of friends she follows. This is illustrated graphically in Fig.~\ref{fig:toy}(c) and (d). Node $b$ is more likely to receive a message from $a$ than node $c$ because $c$ is receiving messages from eight nodes, while $b$ from only one node.

% expand
Different dynamic processes lead to different notions of centrality. PageRank is used to find nodes that are often visited by a random walk (with random restarts), while Alpha- (or Bonacich) Centrality identifies nodes that are often infected during an epidemic~\cite{Ghosh12nonconservative}.
%Constraints imposed by limited attention change each centrality measure.
Below we define limited-attention PageRank and limited-attention Alpha-Centrality,  centrality measures that take into account the finite attention of online social users. Limited-attention PageRank identifies nodes that are often visited by a random walk, when each node's capacity to receive the walker depends on its in-degree. Similarly, limited-attention Alpha-Centrality identifies nodes that are often infected in an epidemic, when each node's susceptibility to infection also depends on its in-degree.

\section{Limited-Attention Centrality}
\label{sec:centrality}
We represent a network as a directed  graph  with $V$ nodes and $E$ edges. The adjacency matrix of the graph is defined as: $A[u,v]= 1$ if there is an edge from $u$ to $v$; otherwise, $A[u,v]= 0$. Also, $A[u,u]=0$. The set of  out-neighbors of $u$ is $\lbrace v \in V \vert (u,v) \in E \rbrace$; and the set of in-neighbors is $\lbrace v \in V \vert (v,u) \in E \rbrace$. Two other important quantities are the in-degree and out-degree matrices.  The out-degree matrix  $D_{out}$ is a diagonal matrix defined as $D_{out}[i,i]=\sum_j A[i,j]=Ae^T$ and $D_{out}[i,j]=0$ $\forall$ $i\neq j$. Here, $e$ is a $|V|$-dimensional row vector of ones, and $e^T$ is its transpose. The in-degree matrix  $D_{in}$ is a diagonal matrix defined as $D_{in}[i,i]=\sum_i A[i,j]=eA$ and $D_{in}[i,j]=0$ $\forall$ $i\neq j$.

\subsection{Limited-attention PageRank}
\remove{
\begin{figure*}[htbp] %
   \centering
   \begin{tabular}{cc}
   \includegraphics[height=1.5in]{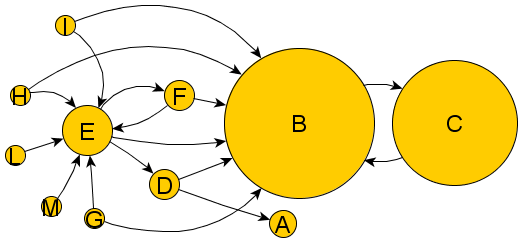} &
   \includegraphics[height=1.5in]{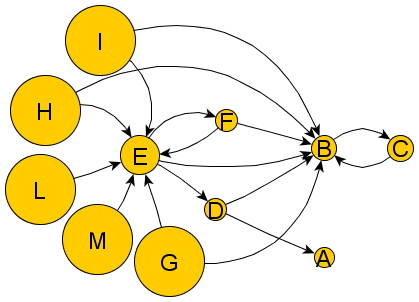}\\
   (a) PageRank & (b) limited-attention PageRank\\
   \end{tabular}
   \caption{Directed network with sizes of nodes weighed by their score according to (a) PageRank and (b) attention-limited PageRank.}
   \label{fig:PR}
\end{figure*}
}

A PageRank vector ${\pr}(\alpha, s)$ is the steady state probability distribution of a random walk with restarts with a damping factor $\alpha$. This means that with a probability $\alpha$, the walk transitions to one of the out-neighbors of a current node, and with probability ($1-\alpha$) it transitions to any node in the network.  The starting vector $s$, gives the probability distribution for where the walk transitions after restarting, which is usually taken as a uniform vector $s=e/|V|$.   The transfer matrix $D_{out}^{-1}A$ encodes the transition probabilities of a random walk on the network. PageRank vector ${\pr}(\alpha,s)$ is the unique solution of the following iterative equation:
\begin{equation}
\label{eq:pr}
{\pr}(\alpha, s) = (1-\alpha) s +  \alpha  {\pr}(\alpha, s)D_{out}^{-1}A
\end{equation}

Now, if a node's capacity to receive a random walker is limited, the transfer matrix must be modified. As stated above, we consider the simplest scenario in which the finite capacity is divided uniformly between all incoming connections. This case is modeled by the transfer matrix $D_{out}^{-1}AD_{in}^{-1}$. Therefore, limited-attention PageRank $\lapr(\alpha,s)$ is the solution of the following iterative equation:
\begin{equation}
\label{eq:lapr}
{\lapr}(\alpha, s) = (1-\alpha) s +  \alpha {\lapr}(\alpha, s)D_{out}^{-1}AD_{in}^{-1}
\end{equation}
\noindent The starting vector above is $s=e D_{in}^{-1}$. Note that while the PageRank transfer matrix $D_{out}^{-1}A$ is stochastic, since each row
%PJ- since each row sums to one
or column sums to one, this is no longer the case for the limited-attention PageRank transfer matrix.

\remove{
\begin{figure*}[htbp] %
   \centering
   \begin{tabular}{@{}c@{}c@{}}
   \includegraphics[height=1.9in]{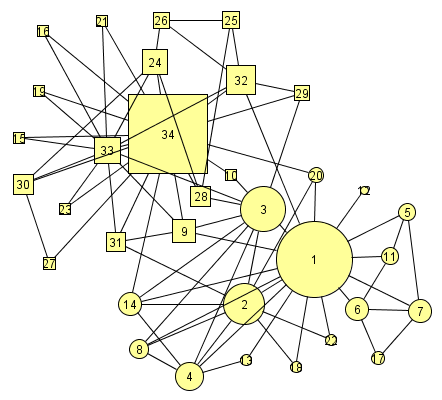} &
   \includegraphics[height=1.9in]{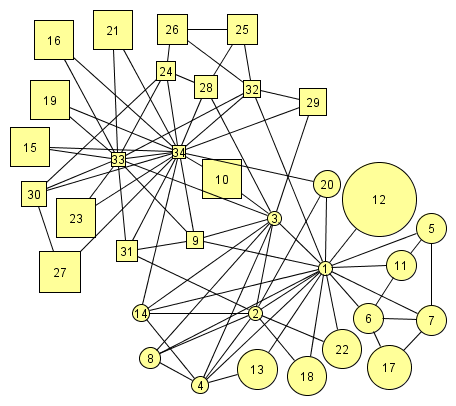}\\
   (a) PR & (b) laPR \\
   \end{tabular}
   \caption{Social network with sizes of nodes weighed by their score according to (a) PageRank and (b) attention-limited PageRank.}
   \label{fig:PR}
\end{figure*}

We illustrate the differences between PageRank and limited-attention PageRank using a benchmark network shown in Figure~\ref{fig:PR}. This real-world network of members of a karate club~\cite{Zachary} has been used widely in social network analysis. The edges represent friendships between club members, while circles and squares correspond to the factions that the club split into over the course of the study. Figure~\ref{fig:PR}(a) shows the network with node size corresponding to its relative rank as determined by PageRank (with $\alpha=0.85$). Nodes $1$ and $34$, the administrator and instructor for the club, are most central. In terms of the random walk, these are the nodes that are visited by the random walk most often. Next in importance are nodes $2$ and $33$, which are connected to these central people, followed by nodes that are connected to both factions, such as $3$ and $32$. Peripheral nodes that have few friendship links have very low centrality. In contrast, limited-attention PageRank scores nodes differently, as shown in Figure~\ref{fig:PR}(b). Now, the formerly important nodes $1$, $2$, $3$, $33$, and $34$, have very low centrality. Since these nodes most pay attention to many friendship links, they have limited ability to receive a random walker. The peripheral nodes, on the other hand, have few friendship links, and are better able to receive a random walker, whether it is following an outlink or executing a random jump. Their importance, therefore, is greater in this scenario.
}

\begin{figure*}[htbp] %
   \centering
   \begin{tabular}{cccc}
   \includegraphics[height=0.9in,width=42mm]{fig/acmGraph_PR_0_85} &
   \includegraphics[height=0.9in]{fig/acmGraph_laPR_0_85}&
   \includegraphics[height=0.9in]{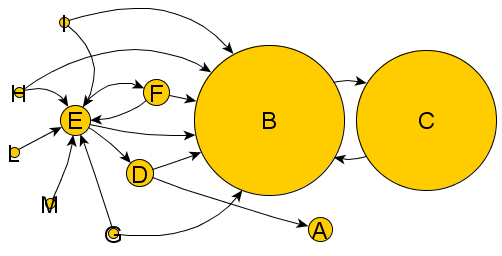} &
   \includegraphics[height=0.9in]{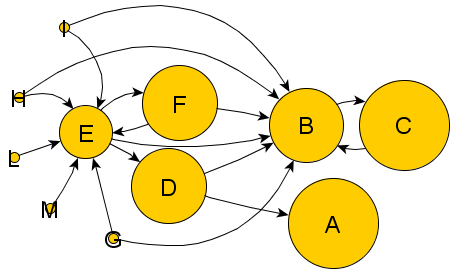}\\
   (a) PR & (b) laPR &(c) AC & (d) laAC \\
   \end{tabular}
   \caption{Directed network with sizes of nodes weighed by their score according to (a) PageRank and (b) attention-limited PageRank (c) Alpha-centrality and (d) limited-attention Alpha-centrality of the influence graph.}
   \label{fig:PR}
\end{figure*}

We illustrate the differences between PageRank and limited-attention PageRank on a toy directed network. Figure~\ref{fig:PR}(a) shows this network with the size of the node proportional to its centrality score relative to other nodes, as determined by PageRank (with $\alpha=0.85$). Node $B$ is the most central, since it has many in-links, enabling a random walker to reach it via many different paths. Peripheral nodes $H$, $I$, $J$, etc., are less important, since they only receive the random walker via a random jump. On the other hand, limited-attention PageRank,  shown in Fig.~\ref{fig:PR}(b), scores these nodes highly. The node ranked highest by PageRank, $B$, on the other hand, dramatically decreases in centrality. This node divides its attention among many in-links, limiting its ability to receive a random walker along any specific link. The peripheral nodes, on the other hand, have few in-links, and are better able to receive the random walker, whether it is following an out-link or executing a random jump. Their importance, therefore, is greater in this scenario.

%contrasts importance of nodes in a toy directed network~\cite{acmPageRank} found by PageRank and attention-limited PageRank. According to PageRank, node $B$ receives most visits from a random walker, and hence, it is judged to be most important. Node $C$ is the next most important, since it is connected to $B$ and receives traffic from it. Nodes $G$ through $L$ are the least important.

\subsection{Limited-attention Alpha-Centrality}

Alpha-Centrality measures the total number of paths from a node, exponentially attenuated by their length. Bonacich introduced this measure~\cite{Bonacich87} as a generalization of the index of status proposed by Katz~\cite{Katz53}, and it is sometimes referred to as Bonacich centrality.
Alpha-Centrality matrix gives the number of attenuated paths between two nodes, and it is usually written as a power series expansion of the adjacency matrix, with attenuation parameter $\alpha \ge 0$:
%\begin{equation}
%\label{eq:cr-matrix}
$C=A + \alpha A^2 + \alpha^2 A^3 + \alpha^3 A^4 + \ldots$. % \nonumber
%\end{equation}
This series converges to $C=\alpha A (I-\alpha A)^{-1}$ while $\alpha < {1}/{\lambda_{max}}$, where $\lambda_{max}$ is the largest eigenvalue of $A$ (i.e., spectral radius of the network). Parameter $\alpha$  determines how far, on average, a node's effect will be felt and sets the length scale of interactions. When $\alpha$ is small, Alpha-Centrality probes only the local structure of the network. As $\alpha$ grows, more distant nodes contribute to the centrality score of a given node~\cite{Ghosh11physrev}. As $\alpha \to 1/{\lambda_{max}}$,  the length scale of interactions diverges and it becomes a global measure.

Alpha-Centrality gives the steady state distribution of an epidemic process on a network~\cite{Ghosh12nonconservative}, where $\alpha$ is the probability to transmit a message or influence along a link. Therefore, $(i,j)$th entry of the Alpha-Centrality matrix $C$ can be interpreted as the likelihood that the  virus will reach node $j$ from node $i$. Summing over all columns $j$ gives the Alpha-Centrality score of node $i$, $\ac(\alpha)=C e^T=\sum_j C(i,j)$,  or the number of infections directly or indirectly caused by node $i$. Summing over the rows of the Alpha-Centrality matrix, on the other hand, gives $\ac(\alpha)^T=e \cdot C=\sum_i C(i,j)$,  the total number of times that node $i$ is infected by others.

Alpha-Centrality vector $\ac({\alpha},s)$ can also be defined iteratively as:
\begin{equation}
\ac({\alpha},s)=s+ \alpha A \cdot \ac({\alpha},s),
\label{a-cen1}
\end{equation}
\noindent where the starting vector $s=Ae^T$ is taken as out-degree centrality~\cite{Bonacich87}.

Let us now consider the case in which a node's capacity to receive incoming stimuli --- whether messages or viruses --- is limited and uniformly divided among all incoming connections. Therefore, the probability that node $j$ will receive a message broadcast by $i$ will be proportional to $1/d_{in}(j)$, where $d_{in}(j)$ is the in-degree of node $j$. The limited-attention Alpha-Centrality matrix can be written in terms of the modified adjacency matrix $M=AD_{in}^{-1}$ as:
\begin{equation}
\label{eq:lacr}
C_{la}=M + \alpha M^2 + \alpha^2 M^3 + \alpha^3 M^4 + \ldots \nonumber
\end{equation}
The limited-attention Alpha-Centrality vector $\lacr({\alpha},s)$ can also be written in iterative form:
\begin{equation}
\lacr({\alpha},s)=s+ \alpha AD_{in}^{-1} \cdot \lacr({\alpha},s),
\label{eq:laac}
\end{equation}
\noindent with the starting vector $s=AD_{in}^{-1}e^T$.  Note that the transfer matrix $AD_{in}^{-1}$ is a stochastic matrix.

Figures~\ref{fig:PR}(c) and (d) illustrate the differences between Alpha-Centrality and its limited-attention variant. Figure~\ref{fig:PR}(c) shows the directed network with nodes sizes proportional to their $\ac$ scores.  The Alpha-Centrality scores in this example were calculated for $\alpha = 0.85$. The rankings of nodes are similar to those produced by PageRank (Fig.~\ref{fig:PR}(a)), though node $E$, for example, is relatively less important. In the limited-attention variant, shown in Fig.~\ref{fig:PR}(d),  the picture looks completely different. While $B$ in (d) loses its importance, due to may in-links, node $A$ becomes more central, since it receives incoming signals over a single in-link. Peripheral nodes are not judged to be central, because, unlike random jumps in PageRank, they never receive any signals.

\remove{
\begin{figure*}[htbp] %
   \centering
   \begin{tabular}{@{}c@{}c@{}}
   \includegraphics[height=1.9in]{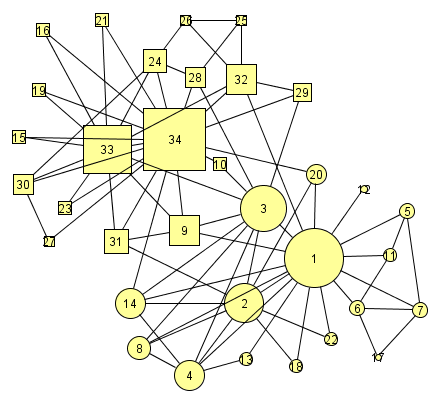} &
   \includegraphics[height=1.9in]{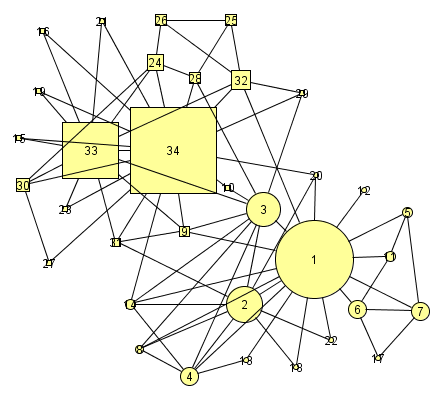}\\
   (a) Alpha-centrality & (b) limited-attention Alpha-centrality
   \end{tabular}
   \caption{Social network with sizes of nodes weighed by their score according to (a) Alpha-Centrality and (b) attention-limited Alpha-Centrality.}
   \label{fig:AC}
\end{figure*}

Figure~\ref{fig:AC} illustrates the differences between Alpha-Centrality and its limited-attention variant. Figure~\ref{fig:AC}(a) shows the network with nodes sizes proportional to their $\ac$ scores.  The Alpha-Centrality scores in this example were calculated for $\alpha = 0.1$. The rankings of nodes are similar to those produced by PageRank (Fig.~\ref{fig:PR}(a)), though bridging nodes connected to both factions are relatively more important than for PageRank. As $\alpha$ approaches inverse of the spectral radius, bridging nodes become ever more important~\cite{Ghosh11physrev}. In the limited-attention variant of Alpha-Centrality, shown in Fig.~\ref{fig:AC}(b), node rankings do not change as drastically as for PageRank. Nodes $1$ and $34$ remain central, although centrality of nodes with more links does decrease more quickly than the centrality of less connected nodes.
}

%We used approximate algorithms (described in the Appendix) to compute centrality scores of users of large networks from social media sites, containing hundreds of thousands of users. While comparison with exact centrality algorithms was not computationally feasible for these networks, we compared centrality rankings with those produces by an empirical measure of influence.

\section{Applications to Social Media}
\label{sec:experiments}
We use centrality measures proposed in this paper to identify influential people on social media. Correctly identifying such people can have far-reaching consequences for identifying noteworthy content, targeted information diffusion, and other applications. While calculating Eq.~\ref{eq:lacr} was infeasible for such large networks, we used approximate algorithms presented in the Appendix for these calculation. Appendix also gives performance guarantees of the approximate algorithms.

Researchers have proposed a number of simple heuristics to identify influential social media users that rely, for example, on the number of followers or mentions~\cite{Cha10icwsm,Lee10www,Bakshy11wsdm}. Others have used centrality by analyzing the follower graph to find users with high PageRank scores~\cite{twitter,Tang12}. However, since information spread on networks is traditionally described as an epidemic~\cite{Gruhl04,Leskovec07}, Alpha-Centrality may do a better job~\cite{Ghosh10snakdd}, since it explicitly models epidemic dynamics. We show, however, that limited-attention Alpha-Centrality, the measure that  accounts for both the epidemic nature of social media broadcasts and the divided attention of its users, does a better job identifying influential users than Alpha-Centrality.

Specifically, we study URL-sharing activity on  Digg and Twitter, two popular social media sites for content sharing.
Both sites allow users to follow other users by listing them as friends. The follower relation is asymmetric. When user $A$ follows (becomes as fan of) $B$, she receives $B$'s broadcasts, but not vice versa: we denote the relationship as $B \to A$. Representing the follower graph in matrix form, a user's out-degree measures the number of followers she has, and her in-degree the number of friends she follows.

\subsection{Data Collection}
%\paragraph{Digg}
The Digg dataset\remove{http://www.isi.edu/$\sim$lerman/downloads/digg2009.html} contains more than 3 million votes on some 3500 stories promoted to Digg's front page in June 2009. More than 139K distinct users voted for at least one story in the data set (submission counts as the story's first vote). We call these users \emph{active} users. Next, we extracted the friendship links created by {active} users and constructed a follower graph that contained active users who were following the activities of others.
% 71,367 active users who listed others as friends
Only about 71K active users listed others as friends, resulting in network with around 280K users and over 1.7 million links.

%\paragraph{Twitter}
The Twitter data set was collected over a period of three weeks in October 2010 using the Gardenhose streaming API. We focused on tweets that included a URL in the body of the message. In order to ensure that we had the complete tweeting history of the URL, we used the search API to retrieve all tweets containing that URL. Users who tweeted the URL are considered \emph{active}. Data collection process resulted in more than 3 million posts tweeted by 816K users which mentioned 70K distinct  URLs. Next, we used the REST API to collect followers of  each active user, keeping only those followers who themselves were active, i.e., tweeted at least one URL during data collection period. The resulting follower graph had almost 700K nodes and over 36 million edges. More details of the data collection method are provided in \cite{Ghosh12nonconservative}.

\subsection{Results}

We calculate Alpha-Centrality ($AC$) and limited-attention Alpha-Centrality ($laAC$) on the Digg and Twitter follower graphs using algorithm for $laAC$ (Alg.~\ref{alg:laac})  presented in the Appendix and the algorithm for $AC$ presented in \cite{Ghosh11nonconservative}. These are approximate algorithms with proven performance guarantees. We calculate limited-attention PageRank ($laPR$)  on the transpose of the follower graph using Alg.~\ref{alg:lapr}, since node's influence is related to the number of walks  it generates, rather than receives. The in- and out-degrees were conditioned by adding a small number (0.01) to avoid division to zero.

\begin{figure}[htbp] %
   \centering
   \begin{tabular}{c}
   \includegraphics[width=0.9\columnwidth]{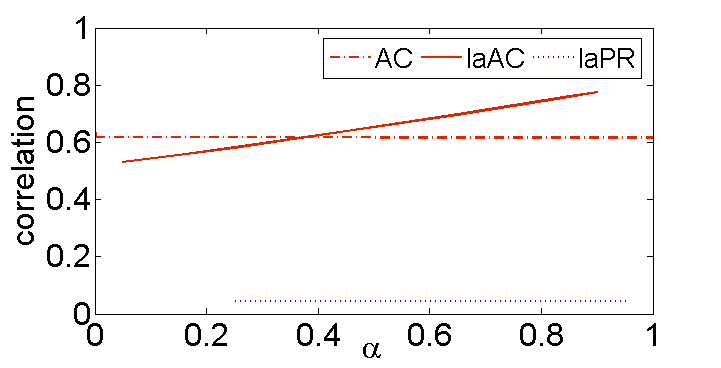} \\
(a) Digg \\
   \includegraphics[width=0.9\columnwidth]{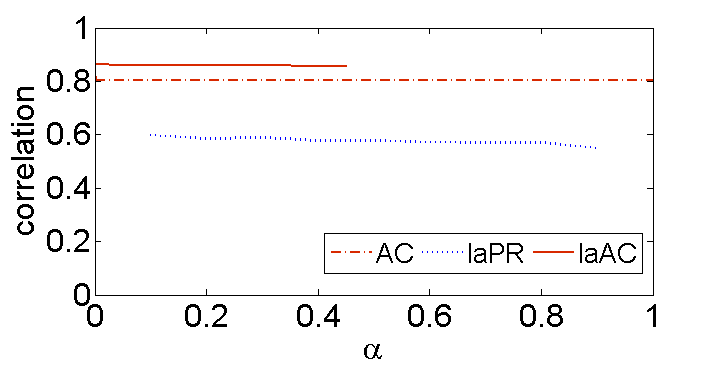} \\
(b) Twitter
   \end{tabular}
   \caption{Correlation of rankings of (a) Digg and (b) Twitter users found by different measures of centrality with the empirical influence ranking.}
   \label{fig:correlations}
\vspace{-0.88em}
\end{figure}

In order to compare the performance of centrality measures, we need a relevant measure of influence. 
When a user posts a URL on Digg or Twitter, she broadcasts it to all her followers. We refer to this user as the \emph{submitter}. Whether or not her follower will re-broadcast the URL (i.e., retweet it on Twitter or vote for it on Digg)  depends on its \emph{quality} and  \emph{submitter's influence}. Assuming that URL's quality is uncorrelated with the submitter, we can average out its effect by aggregating over all URLs submitted by the same user~\cite{Ghosh10snakdd}. The residual difference between submitters can be attributed to variations in influence.
Similar to \cite{Cha10icwsm,Ghosh12nonconservative,Bakshy11wsdm}, we use the average number of times the URLs submitted by the user are re-broadcast by her followers as the \emph{empirical measure of influence}.
%We use two empirical measures of submitter's influence: ($i$) the average number of times her submissions are re-broadcast by her followers (local influence~\cite{Bakshy11}), and ($ii$) average size of the cascades her posts trigger (global influence~\cite{Bakshy11}).
%Below we compare how well different centrality measures reproduce the rankings found by this  empirical measure of influence.

Figure~\ref{fig:correlations} shows how well the rankings produced by different centralities correlate with the empirical influence rankings of users who submitted at least two URLs which were rebroadcast at least ten times. We use Spearman rank correlation because it is less sensitive to variations in scores, and we expect some variation to arise in approximate centrality scores.
Limited-attention Alpha-Centrality correlates better with the empirical measure of influence than Alpha-Centrality over a broad range of $\alpha$ values, consistent with our claim that $laAC$ is a better measure for predicting central social media users, because it better models the dynamics of online communication than $AC$. On Digg, $AC$ appears to outperform $laAC$ for small values of  $\alpha$. Since $\alpha$ can be thought of as the scale of interaction, this implies that locally, $AC$ better predicts influential users. This could be the consequence of the fact that our measure of influence, i.e., number of re-broadcasts by followers, is a local measure. In the future, we plan to compare the performance of centrality measures using a global measure of influence, for example, the average size of cascades triggered by submitted URLs.
We did not expect limited-attention PageRank ($laPR$) to predict influence rankings of Digg and Twitter users, since the dynamic process this centrality models does not at all describe communication patterns of social media users, and we found no correlation.

Interestingly, PageRank and $laAC$ have similar performance, since $laAC$ calculated on the adjacency matrix $A$ of the follower graph is almost identical to $PR$ calculated on the transpose of $A$, except that the starting vectors are different in the two algorithms. This suggests that dynamics of random walk are almost equivalent to epidemic dynamics under the conditions of uniformly divided attention, when direction of the flow is reversed. This observation could explain why $PR$ can give good results in the social media domain. We leave implications of this observation for future research.

\section{Conclusion}
Information flow in social networks, including online networks, is often modeled as an epidemic process, suggesting that centrality measures based on epidemics are appropriate for predicting influential social media users. We propose a new centrality measure that takes into account the finite capacity of social media users to process incoming messages from friends. We modeled such limited attention by scaling the probability a node receives a message by the inverse of its in-degree. We presented approximate algorithm that allows us to efficiently calculate proposed measure for the real-world social networks on Digg and Twitter. We showed empirically that centrality measure that models limited-attention epidemics does a better job predicting highly retweeted social media users than one that models simple epidemics. Our findings suggest that the nature of interactions among network nodes should determine how central nodes are identified.

\subsection*{Acknowledgements}
This material is based upon work supported by the Air Force Office of Scientific Research under contracts FA9550-10-1-0569  and FA9550-10-1-0102, by the Air Force Research Laboratories under contract FA8750-12-2-0186, by DARPA under contract W911NF-12-1-0034, and by the National Science Foundation under grant CIF-1217605.
PJ's internship was sponsored by the USC Viterbi-India Summer program.

\appendix
\section*{Appendix: Approximate Algorithms}
\label{sec:algorithms}
Finding limited-attention PageRank (Eq.~\ref{eq:lapr}) and Alpha-Centrality (Eq.~\ref{eq:lacr}) requires the computation of matrix inverse, which can be done in $O(|V|^{3})$ operations using the naive implementation of the algorithm ($|V|$ is the number of nodes in the network). This is prohibitively expensive for networks with thousands  or more nodes.
However, solving  equations iteratively requires $O(|V|^{2})$ operations in each iteration, though we do not know how many iterations are sufficient for an optimal solution. We propose Approximate Limited-Attention Page Rank and Approximate Limited-Attention Alpha Centrality algorithms, which can be used to calculate a near optimal solution. The algorithms use a single error tolerance parameter $\delta $ ($0 < \delta \le 1$) to control both the quality of the solution and computation time.
%The error tolerance parameter $ \delta $ is inversely proportional to result quality as well as computation time.
% Graph (V, E), Starting Vector ($\emph s$) and alpha($ \alpha $) are the remaining parameters of the algorithms.

The proposed algorithms and their performance guarantee are based on the approximate PageRank~\cite{approximatePR} and approximate Alpha-Centrality~\cite{Ghosh11nonconservative} algorithms. They provide a flexible way to compute the near optimal centrality vector $\tilde{\cent}$ using a starting vector $s$ and a residual vector $r$. Initially $r = s$ and $\tilde{\cent}=\vec{0}$. The algorithms iteratively move the weight from ${r}$ to  $ {\tilde{\cent}}$ vector, until the values in the residual vector $r$ are sufficiently small. The amount of error in the approximate centrality vector is equivalent to the amount remaining in the residual vector. The performance guarantee of the proposed algorithms are given in  Theorem~\ref{theorem1} and Theorem~\ref{theorem2} , which are based on Lemma~\ref{lemma1}. The Lemma states that each iteration maintains an invariant vector $\tilde{\cent} = \cent(s) - \cent(r) = \cent(s - r)$. This means that the amount of error in the approximate centrality vector is equivalent to the error remaining in the residual vector.
%title: Algorithm: Proof of Correctness
\begin{proposition}
For any fixed value of $\alpha$ in $[0,1]$ and starting vector $s$, $\cent(\alpha,s)$ is linear in  $s$.
\label{proposition1}
\end{proposition}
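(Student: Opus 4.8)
The plan is to notice that each of the four centrality vectors in this paper is the unique solution of a fixed-point equation of one common shape, namely
\begin{equation}
\cent(\alpha,s) = c_0\, s + \alpha\, \Phi\big(\cent(\alpha,s)\big),
\label{eq:propform}
\end{equation}
where $c_0$ is a constant ($c_0 = 1-\alpha$ for the PageRank variants in Eqs.~\ref{eq:pr} and \ref{eq:lapr}; $c_0 = 1$ for the Alpha-centrality variants in Eqs.~\ref{a-cen1} and \ref{eq:laac}), and $\Phi$ is a fixed linear map determined by the network alone --- multiplication by the appropriate transfer matrix ($D_{out}^{-1}A$, $D_{out}^{-1}AD_{in}^{-1}$, $A$, or $AD_{in}^{-1}$), acting on the side matching the convention used for $\cent$ in each equation. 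The key observation is simply that in \ref{eq:propform} the dependence on the starting vector enters only through the single term $c_0 s$; the particular starting vectors named in the text are just special choices of $s$.

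From \ref{eq:propform} there are two equivalent ways to finish. The direct route: in the range where $\mathrm{id}-\alpha\Phi$ is invertible (equivalently, $\alpha$ below the reciprocal of the spectral radius of the transfer matrix --- e.g. $\alpha < 1/\lambda_{max}$ in the Alpha-centrality case), rearranging \ref{eq:propform} gives the closed form
\begin{equation}
\cent(\alpha,s) = c_0\,(\mathrm{id}-\alpha\Phi)^{-1}(s) = c_0\sum_{k\ge 0}\alpha^k\,\Phi^k(s),
\end{equation}
and for fixed $\alpha$ the map $s \mapsto c_0\,(\mathrm{id}-\alpha\Phi)^{-1}(s)$ is a scalar multiple of a fixed linear operator, hence linear: $\cent(\alpha, a s_1 + b s_2) = a\,\cent(\alpha,s_1) + b\,\cent(\alpha,s_2)$ for all scalars $a,b$. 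The route I would actually put in the paper avoids computing the inverse explicitly and uses only uniqueness of the solution: if $\cent(\alpha,s_1)$ and $\cent(\alpha,s_2)$ solve \ref{eq:propform} with sources $s_1$ and $s_2$, then --- by linearity of $\Phi$ --- the vector $a\,\cent(\alpha,s_1) + b\,\cent(\alpha,s_2)$ solves \ref{eq:propform} with source $a s_1 + b s_2$, so by uniqueness it must equal $\cent(\alpha, a s_1 + b s_2)$. Specializing to $a=1$, $b=-1$ yields the identity $\cent(s) - \cent(r) = \cent(s-r)$ that underlies Lemma~\ref{lemma1}.

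The only point needing care --- the closest thing to an obstacle --- is the boundary value of $\alpha$ at which $\mathrm{id}-\alpha\Phi$ becomes singular (in particular $\alpha=1$ when the transfer matrix is stochastic, as $AD_{in}^{-1}$ is), where the Neumann series diverges and $\cent(\alpha,s)$ must instead be read as a normalized stationary vector or as a limit; linearity then survives only on the subspace of sources for which that object is well-defined, which can be dispatched in one sentence. For every $\alpha$ strictly inside the convergence range the claim is the routine linear-algebra observation above and requires nothing beyond rearranging \ref{eq:propform}.
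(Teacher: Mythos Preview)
Your argument is correct and follows the same line as the paper: write the centrality as the unique fixed point of an equation of the form $\cent(s)=c_0 s+\alpha\,\cent(s)M$ and read off linearity in $s$ from this. The paper in fact does less than you do --- after displaying the two fixed-point equations it simply defers to the proof in \cite{Ghosh11nonconservative}, whereas you supply the closed-form/Neumann-series computation and the uniqueness superposition argument explicitly; your additional remark about the boundary value of $\alpha$ is a nice care point the paper does not mention.
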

\begin{proof}
The limited-attention PageRank vector $ {\lapr}(\alpha, s) $ is a unique solution to
\begin{equation*}
\label{eq:lapr2}
%RGcr(s)={\lapr}(\alpha, s) = (1-\alpha) s +  \alpha \cdot {\lapr}(\alpha, s)M
\cent(s)={\lapr}(\alpha, s) = (1-\alpha) s +  \alpha \cdot {\lapr}(\alpha, s)M
\end{equation*}
\noindent where M=$D_{out}^{-1}AD_{in}^{-1}$. The limited-attention Alpha-Centrality vector $\lacr({\alpha},s)$ can also be written in iterative form:
\begin{equation*}
\cent(s)=\lacr({\alpha},s)=s+ \alpha \cdot \lacr({\alpha},s)M,
\label{a-cen2}
\end{equation*}
\noindent where M=$AD_{in}^{-1}$. The centrality vectors can be proved linear with respect to $s$ by substituting suitable values for $\cent(s)$ and  $M$ in the proof presented in \cite{Ghosh11nonconservative}.
\end{proof}

\begin{lemma}
At the start of each iteration of while loop $\tilde{\cent}$ = $\cent(s) - \cent(r)$ = $\cent(s - r)$ such as the sum of elements in $r$ decreases with each iteration.
\label{lemma1}
\end{lemma}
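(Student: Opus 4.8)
The plan is to prove Lemma~\ref{lemma1} by induction on the iterations of the while loop, using Proposition~\ref{proposition1} (linearity of $\cent$ in its argument) as the central tool. First I would fix the update rule that each iteration performs: the algorithm picks some coordinate $i$ with a sufficiently large residual entry $r_i$, moves a fixed fraction of that weight into $\tilde{\cent}_i$ (for the PageRank-type version, the fraction $(1-\alpha)r_i$; for the Alpha-Centrality version the entry $r_i$ itself), and then pushes the remaining mass along the out-edges of $i$ according to the transfer matrix $M$, updating $r \leftarrow r - r_i \mathbf{1}_i + \alpha r_i \mathbf{1}_i M$ (with the obvious sign/coefficient bookkeeping matching Eq.~\ref{eq:lapr} / Eq.~\ref{eq:laac}). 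I would state this explicitly as a single ``push'' operation so the induction step has something concrete to chew on.

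Next, for the base case, at initialization $\tilde{\cent} = \vec{0}$ and $r = s$, so $\cent(s) - \cent(r) = \cent(s) - \cent(s) = \vec 0 = \tilde{\cent}$, and $\cent(s-r) = \cent(\vec 0) = \vec 0$ as well, where the last equality uses linearity from Proposition~\ref{proposition1}; so all three expressions agree. For the inductive step, assume $\tilde{\cent} = \cent(s) - \cent(r) = \cent(s-r)$ at the start of an iteration, and let $\tilde{\cent}', r'$ denote the values after one push at coordinate $i$. The change in $\tilde{\cent}$ is exactly $\Delta = c\, r_i \mathbf{1}_i$ for the appropriate constant $c$, and the change in $r$ is $r' - r = -r_i\mathbf{1}_i + \alpha r_i \mathbf{1}_i M$, which I would rearrange so that $r' = r - r_i\mathbf{1}_i(I - \alpha M)$. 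The key computation is then that $\cent(r) - \cent(r') = \cent(r - r') = \cent\big(r_i \mathbf{1}_i (I-\alpha M)\big)$, and since $\cent(v)$ solves $\cent(v) = v + \alpha\,\cent(v)M$ (equivalently $\cent(v)(I-\alpha M) = v$, so $\cent\big(w(I-\alpha M)\big) = w$), this collapses to $r_i \mathbf{1}_i$, up to the $(1-\alpha)$ factor in the PageRank case which is precisely absorbed into $c$. Hence $\cent(s) - \cent(r') = \big(\cent(s)-\cent(r)\big) + r_i\mathbf{1}_i\cdot c = \tilde{\cent} + \Delta = \tilde{\cent}'$, and a second application of linearity gives $\cent(s-r') = \cent(s) - \cent(r') = \tilde{\cent}'$, closing the induction.

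For the monotonicity claim — that $\sum_j r_j$ strictly decreases each iteration — I would compute $\sum_j r'_j - \sum_j r_j = -r_i + \alpha r_i (\mathbf{1}_i M e^T) = -r_i\big(1 - \alpha\,(\mathbf{1}_i M e^T)\big)$. Since $\alpha < 1$ and $M$ is (sub)stochastic so that the row sum $\mathbf{1}_i M e^T \le 1$, the bracket is strictly positive, and because the algorithm only pushes on coordinates with $r_i$ strictly positive and bounded below by a threshold proportional to $\delta$, the decrease is strict (indeed bounded away from zero, which is what the later running-time/termination arguments of Theorems~\ref{theorem1} and~\ref{theorem2} will need). I would be a little careful here about the Alpha-Centrality variant where $M = AD_{in}^{-1}$ is exactly stochastic: then $\mathbf{1}_i M e^T = 1$ and the decrease is $-r_i(1-\alpha)$, still strictly negative since $\alpha<1$; I should make sure the statement is phrased for $\alpha$ strictly less than $1$ (or less than $1/\lambda_{\max}$ in the non-normalized reading), which the excerpt's setup guarantees.

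The main obstacle I anticipate is not the algebra but getting the two variants to share one proof: the PageRank-type recursion carries the extra $(1-\alpha)$ on the starting/source term while the Alpha-Centrality recursion does not, and the transfer matrices differ ($D_{out}^{-1}AD_{in}^{-1}$ versus $AD_{in}^{-1}$, the former not even stochastic). I would handle this by writing everything in the uniform form $\cent(v) = \beta v + \alpha\,\cent(v)M$ with $(\beta, M)$ instantiated appropriately, noting that Proposition~\ref{proposition1} already certifies linearity in $v$ in both cases, and then verifying the single identity $\cent\big(w(I-\alpha M)\big) = \beta^{-1} w$ that the push step relies on — after which the induction and the monotonicity estimate go through verbatim, with the substochasticity of $M$ (true in both cases) supplying the strict-decrease inequality. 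The only genuinely delicate point is confirming $I - \alpha M$ is invertible so that $\cent$ is well-defined and the manipulations are legitimate, which follows from $\alpha$ being below the relevant spectral threshold as assumed throughout the paper.
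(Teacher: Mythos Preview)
Your proposal is correct and follows essentially the same route as the paper: induction on iterations using the linearity of Proposition~\ref{proposition1}, with the defining recursion applied to the single pushed coordinate to show the invariant is preserved, and a direct computation of $\sum_j r_j' - \sum_j r_j$ for monotonicity. Two small bookkeeping slips to fix: the identity you want is $\cent\big(w(I-\alpha M)\big)=\beta\,w$ (not $\beta^{-1}w$), and in the $laAC$ variant the push goes to \emph{in}-neighbors with $M=AD_{in}^{-1}$ column-stochastic, so the relevant unit sum is the column sum of $M$ rather than a row sum --- with those corrections your argument coincides with the paper's.
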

\begin{proof}
The proof of correctness is based on Proposition ~\ref{proposition1}. During initialization, $\emph {r} = {s}$ and $\tilde{\cent}$ = $\vec{0}$; therefore, $\cent(s - r) = \cent(\vec{0}) = \vec{0} = \tilde{\cent}$. The lemma is maintained throughout the execution of the loop. To prove this, we use a row vector $z_u$ such as $z_u(i)=1$ if $i=u$; otherwise, $z_u(i)=0$. Before the next iteration of while loop in Algorithm ~\ref{alg:lapr} we have $\tilde{\cent} = \tilde{\cent} + (1 - \alpha) z_i r(i)$ and $r^{\prime} = r -  z_i r(i) + \alpha r(i)  M z'$ where $\tilde{\cent}^{\prime}, r^{\prime}$ are updated centrality vector and residual vectors and $i$ is the vertex dequeued in line number 11 of the algorithm. Now consider
\begin{eqnarray*}
{\cent}(r) & = & {\cent}(r - z_i r(i)) + {\cent}(z_i r(i))\\
& =&  {\cent}(r - z_i r(i)) + (1 - \alpha) z_i r(i) + {\cent}(\alpha z_i r(i) M)\\
& =&  {\cent}(r - z_i r(i) + \alpha z_i r(i) M) + (1 - \alpha) z_i r(i)\\
& =& {\cent}(r^{\prime}) + \tilde{\cent}^{\prime} - \tilde{\cent}
 ={\cent}(r^{\prime}) + \tilde{\cent}^{\prime} - \cent(s - r)
\end{eqnarray*}

\noindent It follows that $\tilde{\cent}^{\prime} = \cent(r) - \cent(r^{\prime}) + \cent(s - r) = \cent(r - r^{\prime} + (s - r)) = \cent(s - r^{\prime})$. On termination of the loop, given the lemma and an error tolerance parameter the approximate centrality vector should always satisfy
\begin{equation*}
\cent(s)[i]   \ge   \tilde{\cent}[i]   \ge   (1 - \delta) \cent(s)[i]\ \forall i \in V
\end{equation*}
We choose a uniform starting vector $s$,  $s[i]=||s||_{1}/|V|$, $\forall i \in V$.
%RG not clear
%RG did nt understand algorithms din in one and dout in another
%RG go through the equations what is u and j in algorithm 2, also shouldn't it be dout?
The algorithm terminates when $r[i]  \le \epsilon d^{max}_{out}$; $\forall i \in V$, so we choose $\epsilon = \frac{\delta ||s||_{1}}{|V| d^{max}_{out}} = \frac{\delta s[i]}{d^{max}_{out}}$.
%RG is this d^{in}_max or d^{out}_max?
With this choice of $\epsilon$ we also ensure freedom in choice of the value of $\alpha$ with in the range of 0 to 1. This freedom is achieved at the cost of increased running time of the algorithm.
In the end $r[i] \le \delta s[i]$, therefore, $\implies \cent(r)[i] \le \delta \cent(s)[i]$.
Thus, $$\tilde{\cent}[i] \ge (1 - \delta) \cent(s)[i].$$
It is obvious that $\cent(s)[i] \ge \tilde{\cent}[i]$; hence $\cent(s)[i] \ge \tilde{\cent}[i] \ge (1 - \delta) \cent(s)[i]$  $\forall i \in V$.
Also the sum of all elements of residual vector $\sum{r^{\prime}}$ is
\begin{equation*}
\sum{r^{\prime}}= \sum{r} - r[i] + \left( \alpha \frac{r[i]}{d_{out}(i)} \cdot \sum_{j \in N^{out}(i)}\frac {1} {d_{in}(j)} \right)
\end{equation*}
\noindent Since value of $\alpha$ lies in  [0,1] and $\sum_{j \in N^{out}(i)}\frac {1} {d_{in}(j)} \le d_{out}(i)$, net sum of all values of residual vector decreases with each iteration of while loop.
Similarly the we can prove that the lemma is valid for Algorithm ~\ref{alg:laac}.

\end{proof}

% from docwithij.tex
\subsection{Approximate Limited-Attention PageRank}
Limited attention Page Rank ($laPR$) given by Eq.~\ref{eq:lapr}, can be written as the solution $\cent({\alpha,s})$ of:
$$
\cent(\alpha,s)[j]=(1-\alpha)s[j] +\alpha \sum_{i \in N^{in}(j)} \frac{\cent(\alpha,s)[i]}{d_{out}(i) d_{in}(j)}.
$$
\noindent Here $N^{in}(j)$ is a set of in-neighbors of $j$, i.e., nodes $i$ such that edge $(i,j) \in E$. Also, $N^{out}(j)$ is the set of out-neighbors of $j$, i.e., nodes $i$ such that $(j, i) \in E$. %M(j) is the set of outgoing-nodes from j.
We take the starting vector $s=e/|V|$ to be uniform. To simplify notation, we will refer to $\cent(\alpha,s)$ as $\cent$.

\begin{algorithm}[H]
\caption{Approximate limited-attention PageRank$(V,E,s,\alpha,\delta)$}
\begin{algorithmic}[1]
\STATE $\epsilon = {\delta || s || _{1}}/{ |V| d^{max}_{out}}$;
\STATE $r = s$;
\STATE Queue q = new Queue();
\FOR{each $i \in V$}
\STATE $\tilde{\cent}[i] = 0$;
%RG shouldnt it be d^{out}(i) to get a tighterbound?
\IF{$\frac{r[i]}{ d^{max}_{out}} > \epsilon$}
\STATE q.add($i$);
\ENDIF
\ENDFOR
\WHILE{q.size() $>$ 0}
\STATE $i$ = q.dequeue();
\STATE $\tilde{\cent}[i] = \tilde{\cent}[i] + (1-\alpha) r[i]$;
\STATE $T = \alpha {r[i]}/{d_{out}(i)}$;
\STATE $r[i] = 0$;
\FOR{each $j \in N^{out}(i)$}
\STATE $r[j] = r[j] + {T }/{d_{in}(j)}$;
\IF{!q.contains($j$) and ${r[j]}/{ d^{max}_{out}} > \epsilon$}
\STATE q.add($j$);
\ENDIF
\ENDFOR
\ENDWHILE
\RETURN $\tilde{\cent}$;
\end{algorithmic}
\label{alg:lapr}
\end{algorithm}

\begin{theorem}
Given an $0 \le \alpha < 1$ and a uniform starting vector $s$, the approximate centrality vector $\tilde{\cent}$ is obtained from the algorithm in run time $O\big(\frac{|V| d^{out}_{max}}{(1-\alpha)\delta}\big)$ .
\label{theorem1}
\vspace{-0.41em}
\end{theorem}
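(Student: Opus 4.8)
\noindent
The plan is a standard amortized (``potential-function'') analysis of the push-based iteration, in the spirit of the approximate PageRank algorithm of~\cite{approximatePR}, adapted to the limited-attention transfer matrix $D_{out}^{-1}AD_{in}^{-1}$. I would bound the number $N$ of executions of the \textbf{while} loop and multiply by the $O(d^{out}_{max})$ cost of one iteration.

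To lower-bound the mass pushed per iteration: each execution dequeues some vertex $i$ (line~11) and, at line~12, increases $\tilde{\cent}[i]$ by $(1-\alpha)r[i]$. A vertex is enqueued only when its residual exceeds $\epsilon\,d^{out}_{max}$ (lines~6--7 and~17--18), and while a vertex waits in the queue its residual is never decreased --- the only write to $r[j]$ for a queued $j$ is the additive, non-negative update at line~16, and the membership test at line~17 forbids a duplicate insertion. Hence at the moment $i$ is processed we still have $r[i] > \epsilon\,d^{out}_{max}$, so each iteration adds strictly more than $(1-\alpha)\,\epsilon\,d^{out}_{max}$ to $\|\tilde{\cent}\|_1$.

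To cap the total mass accumulated in $\tilde{\cent}$, consider the potential $\Phi = \sum_i r[i] + \sum_i \tilde{\cent}[i]$, which equals $\|s\|_1$ after the initialization loop. When vertex $i$ is processed, $\sum_i \tilde{\cent}[i]$ grows by $(1-\alpha)r[i]$ while $\sum_i r[i]$ changes by
\begin{equation*}
-\,r[i] \;+\; \frac{\alpha\,r[i]}{d_{out}(i)}\sum_{j \in N^{out}(i)}\frac{1}{d_{in}(j)} \;\le\; -\,r[i] + \alpha\,r[i] \;=\; -(1-\alpha)\,r[i],
\end{equation*}
the inequality being the elementary bound $\sum_{j\in N^{out}(i)} 1/d_{in}(j) \le d_{out}(i)$ (since $d_{in}(j)\ge 1$) already invoked in the proof of Lemma~\ref{lemma1}. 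So $\Phi$ is non-increasing, and since residuals stay non-negative, $\|\tilde{\cent}\|_1 \le \Phi \le \|s\|_1$ throughout. Combining the two estimates gives $N\,(1-\alpha)\,\epsilon\,d^{out}_{max} < \|s\|_1$; plugging in $\epsilon = \delta\|s\|_1/(|V|\,d^{out}_{max})$ from line~1 yields $N < |V|/\big((1-\alpha)\delta\big)$. Since one iteration costs $O(1)$ plus $O(d_{out}(i)) = O(d^{out}_{max})$ for the inner \textbf{for} loop (using an $O(1)$-time membership flag), and the $O(|V|)$ initialization is dominated, the total is $O\big(|V|\,d^{out}_{max}/((1-\alpha)\delta)\big)$.

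I expect the monotonicity claim of the second paragraph to be the only delicate point: one must check that a vertex's residual still exceeds $\epsilon\,d^{out}_{max}$ when it is \emph{dequeued}, not merely when it was \emph{enqueued}, which relies on non-negativity of residuals and on the no-duplicates property of the queue, both read off from lines~14--19 (positivity of all degrees, needed so line~13 is well defined, is guaranteed because the in- and out-degrees are conditioned away from zero). The remaining ingredient, the inequality $\sum_{j\in N^{out}(i)} 1/d_{in}(j)\le d_{out}(i)$, is precisely what prevents the residual mass from growing even though $D_{out}^{-1}AD_{in}^{-1}$ is not stochastic, and it is already established in Lemma~\ref{lemma1}, so no new work is needed there.
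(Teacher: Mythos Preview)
Your proof is correct and follows essentially the same amortized argument as the paper: both show that each \textbf{while} iteration removes at least $(1-\alpha)\epsilon\,d^{out}_{max}$ of mass from the residual (equivalently, adds it to $\|\tilde{\cent}\|_1$) against a total budget of $\|s\|_1$, via the same key inequality $\sum_{j\in N^{out}(i)}1/d_{in}(j)\le d_{out}(i)$, and then multiply the resulting iteration bound by the $O(d^{out}_{max})$ cost of one push. Your explicit verification that the residual still exceeds the threshold at \emph{dequeue} time (not just enqueue time) and the potential-function packaging are minor refinements the paper leaves implicit, but the route is the same.
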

\begin{proof}
Given an $\alpha$ in $[0,1]$. Algorithm~\ref{alg:lapr} works by dividing $\alpha r[i]$ equally amongst all $N^{out}(i)$ out-neighbors of node $i$. Each out-neighbor $j$ receives a fraction of the weight, based on its capacity, $d_{in}(j)$, to receive incoming messages. Hence, all $r[j]$ will increase by some fraction.
Let $r$ be old residual vector and $r^{\prime}$ be the updated residual vector. The sum of all elements of residual vector $\sum{r^{\prime}}$ is
\begin{equation*}
\sum{r^{\prime}}= \sum{r} - r[i] + \left( \alpha \frac{r[i]}{d_{out}(i)} \cdot \sum_{j \in N^{out}(i)}\frac {1} {d_{in}(j)} \right)
\end{equation*}
\noindent The sum of the entries of residual vector decreases by
\begin{eqnarray*}
\sum{r} & -&  \left( \sum{r} - r[i] + \frac{\alpha r[i]}{d_{out}(i)} \sum_{j \in N^{out}(i)} \frac{1}{d_{in}(j)} \right)  \\
\end{eqnarray*}
\begin{eqnarray*}
&= & r[i] - \left( \alpha \frac{r[i]}{d_{out}(i)} \cdot \sum_{j \in N^{out}(i)} \frac{1}{d_{in}(j)} \right)  \\
&>&  r[i] - \left( \alpha \frac{r[i]}{d_{out}(i)} \cdot d_{out}(i) \right)  >  (1 - \alpha) \epsilon d^{max}_{out} \\
\end{eqnarray*}
Let $k$ be the total number of iterations, net amount removed from residual vector will be at least
\begin{eqnarray*}
& & k(1 - \alpha) \epsilon d^{max}_{out} < ||s||_1
\implies   k <  \frac{||s||_1}{(1-\alpha)\epsilon d^{max}_{out}}
\end{eqnarray*}
%RG what is d_out
Since each iteration is proportional to $d_{out}[i]$, the worst case time complexity is $O(\frac{||s||_1}{(1-\alpha)\epsilon})$. For our choice of $\epsilon$, this is equivalent to $O(\frac{|V|d^{max}_{out}}{\delta (1-\alpha)})$.
\end{proof}

\subsection{Approximate Limited-Attention Alpha-Centrality}
Limited attention Alpha-Centrality ($laAC$), given by Eq.~\ref{eq:lacr}, can be rewritten as the solution $\cent({\alpha,s})$ of:
\begin{equation*}
\cent(\alpha, s)[i]=s[i] +\alpha \sum_{j \in N^{out}(i)} \frac{\cent(\alpha,s)[j]}{ d_{in}(j)},
\end{equation*}
\noindent with the starting vector $s[i]=\sum_{j \in N^{out}(i)} {1}/{ d_{in}(j)}$. As before, we use $N^{out}(i)$  to denote the set of out-neighbors, and $N^{in}(i)$ the in-neighbors, of node $i$.

\begin{algorithm}[H]
\caption{Approximate Limited-Attention Alpha-Centrality($V,E,s,\alpha,\delta$)}
\begin{algorithmic}[1]
\STATE $\epsilon = \frac{\delta ||s|| _{1}}{ |V| d^{max}_{in}}$;
\STATE $r = s$;
\STATE Queue q = new Queue();
\FOR{each $i \in V$}
\STATE $\tilde{\cent}[i] = 0$;
\IF {$\frac{r[i]}{d^{max}_{in}}> \epsilon$}
\STATE q.add($i$);
\ENDIF
\ENDFOR
\WHILE{q.size() $>$ 0}
\STATE $i$ = q.dequeue();
\STATE $\tilde{\cent}[i] = \tilde{\cent}[i] +  r[i]$;
\STATE $T = \alpha \cdot \frac{r[i]}{d_{in}(i)}$;
\STATE $r(u) = 0$;
\FOR{each $j \in N^{in}(i)$}
\STATE $r[j] = r[j] + T $;
\IF{!q.contains($i$) and $\frac{r[j]}{d^{max}_{in}} > \epsilon$}
\STATE q.add($j$);
\ENDIF
\ENDFOR
\ENDWHILE
\RETURN $\tilde{\cent}$;
\end{algorithmic}
\label{alg:laac}
\end{algorithm}

\begin{theorem}
%KL - or $0 \le \alpha \le 1$ ?
Given $0 < \alpha <1$ and starting vector $s$, the approximate centrality vector $\tilde{\cent}$ is obtained from the algorithm in run time $O(\frac{|V| d^{max}_{in}}{\delta (1-c)})$.
\label{theorem2}
\end{theorem}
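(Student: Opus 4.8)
The plan is to reproduce the progress argument of Theorem~\ref{theorem1} with the corresponding changes. The structural differences are that Algorithm~\ref{alg:laac} pushes residual weight \emph{backward} along in-edges (it scans $N^{in}(i)$), that the amount deposited at each neighbour is the flat quantity $T=\alpha\, r[i]/d_{in}(i)$ rather than $T/d_{in}(j)$, and that the admission threshold and the constant $\epsilon$ are scaled by $d^{max}_{in}$ instead of $d^{max}_{out}$. So I would (i) compute how the residual sum $\sum r$ changes in one iteration of the while loop, (ii) lower bound that change using the queue-admission condition, (iii) bound the total number $k$ of iterations, and (iv) multiply by the per-iteration cost and substitute the chosen value of $\epsilon$.

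The key step is the one-iteration accounting. When vertex $i$ is dequeued, the algorithm clears its residual, $r[i]\leftarrow 0$, and for each of the $|N^{in}(i)|=d_{in}(i)$ in-neighbours $j$ adds $T=\alpha\, r[i]/d_{in}(i)$ to $r[j]$. This is exactly the push dictated by the fixed-point equation $\cent(\alpha,s)[i]=s[i]+\alpha\sum_{j\in N^{out}(i)}\cent(\alpha,s)[j]/d_{in}(j)$: the mass held at $i$ feeds $\cent[j]$ with coefficient $\alpha/d_{in}(i)$ precisely for those $j$ with $i\in N^{out}(j)$, i.e.\ $j\in N^{in}(i)$, so Lemma~\ref{lemma1} keeps $\tilde{\cent}$ a valid partial solution. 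Writing $r'$ for the updated residual,
\[
\sum r' \;=\; \sum r - r[i] + d_{in}(i)\cdot\frac{\alpha\, r[i]}{d_{in}(i)} \;=\; \sum r - (1-\alpha)\, r[i].
\]
Unlike in the $laPR$ analysis no correction factor $\sum_j 1/d_{in}(j)$ survives here, since every in-neighbour of $i$ receives the identical amount $T$; column-stochasticity of $AD_{in}^{-1}$ makes this bookkeeping an exact identity rather than an inequality. Because $i$ is dequeued only while $r[i]/d^{max}_{in}>\epsilon$, each iteration removes strictly more than $(1-\alpha)\,\epsilon\, d^{max}_{in}$ from the residual sum, which stays nonnegative throughout and starts at $||s||_1$.

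Steps (iii)--(iv) are then routine. If $k$ is the number of while-loop iterations, $k(1-\alpha)\,\epsilon\, d^{max}_{in} < ||s||_1$, hence $k < ||s||_1/\big((1-\alpha)\,\epsilon\, d^{max}_{in}\big)$. Each iteration costs $O(1)$ for the dequeue and scalar updates plus $O(d_{in}(i))=O(d^{max}_{in})$ for the scan of $N^{in}(i)$, so the total running time is $O\big(k\, d^{max}_{in}\big)=O\big(||s||_1/((1-\alpha)\,\epsilon)\big)$. Substituting $\epsilon=\delta\, ||s||_1/(|V|\, d^{max}_{in})$ gives $O\big(|V|\, d^{max}_{in}/((1-\alpha)\,\delta)\big)$, the stated bound (the ``$c$'' in the statement being $\alpha$).

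I do not anticipate a genuine obstacle; this is a short potential-function argument. The points needing care are: (a) verifying $r$ never goes negative, so that $||s||_1$ really bounds the total mass ever removed --- true because each update either zeroes an entry or adds the nonnegative $T$; (b) pinning down the decrement as exactly $-(1-\alpha)r[i]$, which is where the backward push and the flat per-neighbour weight matter, and where one must invoke the $laAC$ recursion (via Lemma~\ref{lemma1}) rather than the $laPR$ one; and (c) reading the line ``$r(u)=0$'' as ``$r[i]=0$'', so that the dequeued vertex's residual is actually cleared --- a fact the decrement bound relies on.
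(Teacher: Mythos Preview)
Your proposal is correct and follows essentially the same potential-function argument as the paper: compute the per-iteration drop in $\sum r$ as exactly $(1-\alpha)r[i]$, lower-bound it via the queue threshold $r[i]>\epsilon\, d^{max}_{in}$, bound the number of iterations $k$, and multiply by the $O(d^{max}_{in})$ per-iteration cost before substituting $\epsilon$. Your write-up is in fact cleaner than the paper's on the bookkeeping (the paper's indices drift between $i$ and $j$), and you correctly flag both the $c\!\leftrightarrow\!\alpha$ typo in the statement and the $r(u)=0$ line that must be read as $r[i]=0$.
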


\begin{proof}
%I dont think alpha can be between 0,1 here
%RG: I am not sure what the above algorithm does. For what the algorithm I think should do, \sum{r^{\prime}}= \sum{r} - r[i] + \left(\alpha r[i]\sum_{j \in N^{out}(i)} \frac{1}{d_{in}(j)}\right) for the coundistions to hold, in this case alpha \le \frac{1}{d^{out}_{max}} \\
Given an $\alpha$ in [0,1]. Let $r$ be old residual vector and $r^{\prime}$ be the updated residual vector.
The sum of all elements of residual vector $\sum{r^{\prime}}$ is
\begin{eqnarray*}
&\sum{r^{\prime}}= \sum{r} - r[i] + \left(\alpha d_{in}(j) \cdot \frac{r[j]}{d_{in}(j)}\right) \\
&= \sum{r} - r[j] + \left(\alpha r[j]\right)
\end{eqnarray*}
\noindent The sum of the entries of residual vector decreases by
\begin{eqnarray*}
\sum{r} &-& \left( \sum{r} - r[j] + \left(\alpha r[j] \right) \right)  \\
&= & r[j] - \left(\alpha r[j]\right)\\
&= & (1 - \alpha)r[j] > (1 - \alpha) \epsilon d^{max}_{in}\\
\end{eqnarray*}
\noindent Let $k$ be the total number of iterations, net amount removed from residual vector will be at least
\begin{eqnarray*}
& k (1 - \alpha) \epsilon d^{max}_{in} < ||s||_1 \implies k < \frac{||s||_1}{(1-\alpha)\epsilon d^{max}_{in}}
\end{eqnarray*}
\noindent Since each iteration is proportional to $d_{in}$, so the worst case time complexity is $O(\frac{||s||_1}{(1-\alpha)\epsilon})$. For our choice of $\epsilon$ this is equivalent to $O(\frac{|V| d^{max}_{in}}{\delta (1-\alpha)})$.
\end{proof}

\subsection{Performance of Approximate Algorithms}
For relatively small networks (up to thousands of nodes), we compared centrality scores calculated by the approximate algorithms  to those calculated by their exact versions.

\begin{figure*}[htb] %
   \centering
   \begin{tabular}{@{}c@{}c@{}}
   \multicolumn{2}{c}{USAir} \\
   \includegraphics[width=0.95\columnwidth]{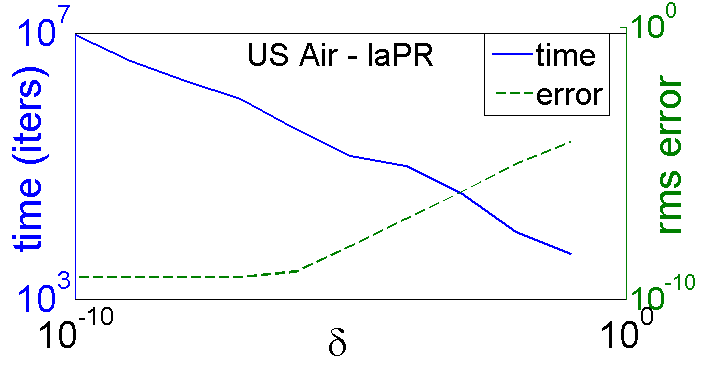} &
   \includegraphics[width=0.95\columnwidth]{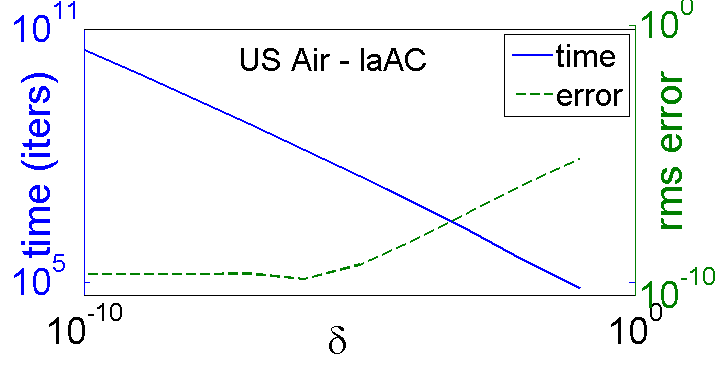} \\
   \multicolumn{2}{c}{Powergrid} \\
   \includegraphics[width=0.95\columnwidth]{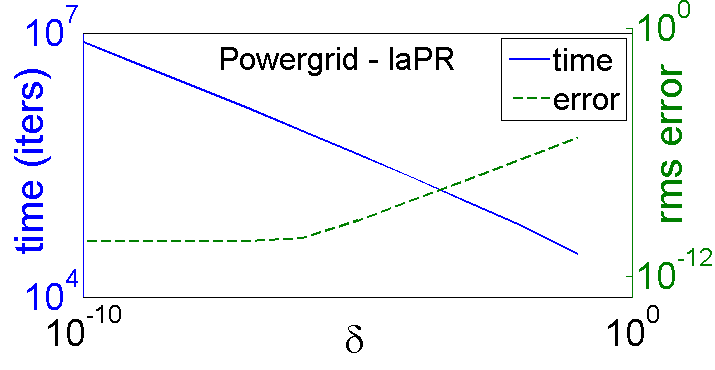} &
   \includegraphics[width=0.95\columnwidth]{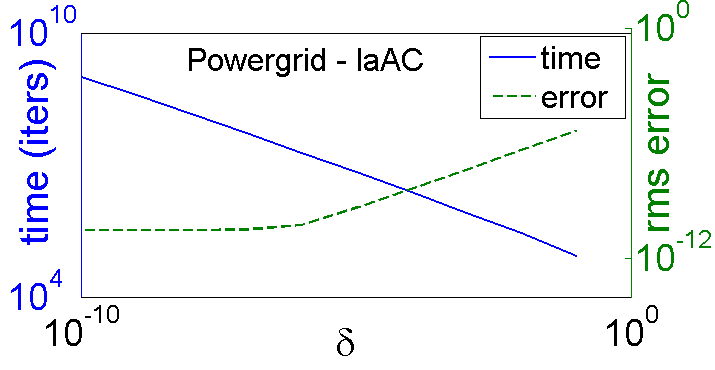} \\
   \multicolumn{2}{c}{Gnutella} \\
   \includegraphics[width=0.95\columnwidth]{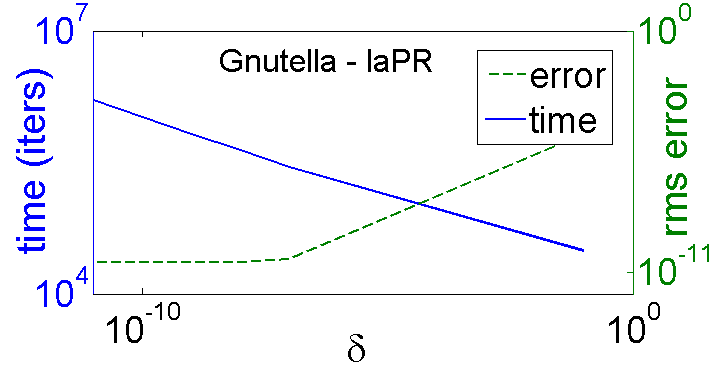} &
   \includegraphics[width=0.95\columnwidth]{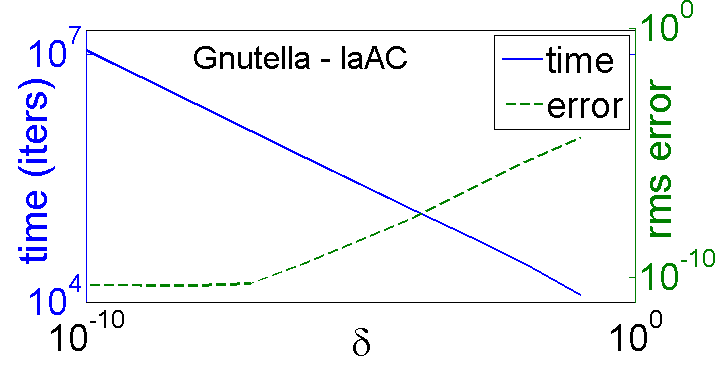} \\
   $laPR$ & $laAC$
   \end{tabular}
   \caption{Performance of the fast approximate limited-attention PageRank ($laPR$) and Alpha-Centrality ($laAC$) on Gnutella, US Air and Power grid networks. Performance is measured by time (number of iterations of the approximate algorithm) and $rms$ error of the centrality values calculated by the approximate and exact algorithms.}
   \label{fig:gnutella}
\vspace{-0.88em}
\end{figure*}

The $USAir$ network\footnote{http://vlado.fmf.uni-lj.si/pub/networks/data/} is  an undirected network of 332 nodes and 4,252 edges, which represent airports linked by direct flights. %The maximum degree is 139.
The $Powergrid$ network\footnote{http://cdg.columbia.edu/cdg/datasets} is an undirected network of 4,941 nodes and 6,594 edges representing the topology of the US Western States power grid.
%\paragraph{Gnutella}
%alpha=0.018
The Gnutella dataset\footnote{http://snap.stanford.edu/data/} contains a snapshot of the Gnutella peer to peer network with 6,301 nodes and 20,777 edges.

Figure~\ref{fig:gnutella} shows the performance of the fast approximate algorithms proposed in this paper on the three networks vs the error tolerance $\delta$. Performance is measured in terms of time (number of iterations) taken to compute approximate centrality values and $rms$ error of these compared to the values computed by the exact algorithms Eqs.~\ref{eq:lapr} and \ref{eq:laac}. In all cases, while it takes longer to compute centrality scores for decreasing values of $\delta$, the answers are closer to their exact values.

\begin{figure}[htbp] %
   \centering
   \begin{tabular}{@{}c@{}}
   \includegraphics[width=0.5\textwidth]{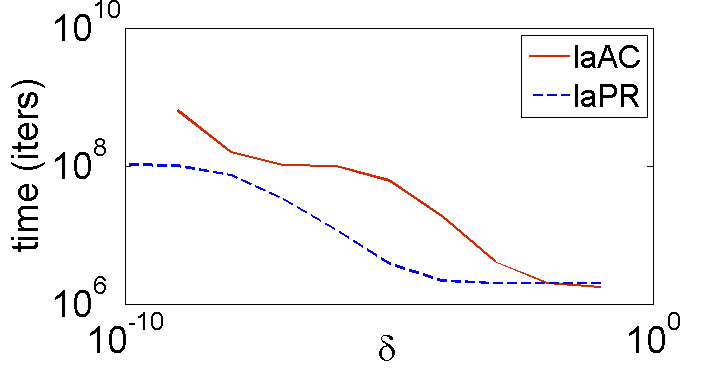} \\
   (a) Digg \\
   \includegraphics[width=0.5\textwidth]{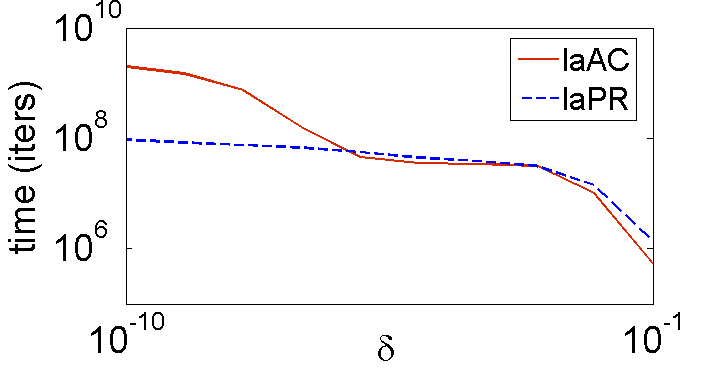} \\
    (b) Twitter
   \end{tabular}
   % \alpha=9.0 \times 10^{-4} % Digg
   % $\alpha=1 \times 10^{-4}$ for laAC and $\alpha=0.9$ for laPR % Twitter
   \caption{Number of iterations required to compute limited attention centralities for different value of of $\delta$ for (a) Digg and (b) Twitter networks. }
   \label{fig:iters}
\vspace{-0.88em}
\end{figure}
Figure~\ref{fig:iters} plots the number of iterations taken by the proposed algorithms to calculate  centralities for the Digg and Twitter data sets for different values of the error tolerance parameter $\delta$. Parameter values used in the calculations were $\alpha=9.0 \times 10^{-4}$ for both  $laAC$ and $laPR$ on Digg, and $\alpha=1 \times 10^{-4}$ for $laAC$ and $\alpha=0.9$ for $laPR$ on Twitter. As expected, the number of iterations increases for smaller error tolerances.

\bibliographystyle{plain}
\bibliography{references}

%\begin{thebibliography}{99}

%\bibitem{BANKSMITH}
%R.~E. Bank and R.~K. Smith, {\em General sparse elimination requires no
%  permanent integer storage}, SIAM J. Sci. Stat. Comput., 8 (1987),
%  pp.~574--584.
%
%\end{thebibliography}

\end{document}